\newtheorem{theorem}{Theorem}
\newtheorem{lemma}[theorem]{Lemma}
\newtheorem{example}{Example}
\newtheorem{remark}{Remark}
\begin{document}
\begin{center}
\textbf{\Large{A kind of quaternary sequences of period $2p^mq^n$ and their linear complexity}}\footnote {the work was supported by
  National Science Foundation of
China No. 61602342 and No. 11701553,  Natural Science Foundation of Tianjin  under grant No. 18JCQNJC70300, the Science and
Technology Development Fund of Tianjin Education Commission
for Higher Education  2018KJ215, and the China Scholarship Council (No. 201809345010), Key Laboratory of Applied Mathematics of Fujian Province University
(Putian University) (No. SX201804 and SX201904), NFSC  No. 61872359, 61972456,  61802281, NSFT  No.16JCYBJC42300, the Science and
Technology Development Fund of Tianjin Education Commission
for Higher Education, No. 2017KJ213   and Foundation of Science
and Technology on Information Assurance Laboratory (No.
61421120102162112007) .\\
}
\end{center}

\begin{center}
\small Qiuyan Wang$^{a,b}$, Chenhuang Wu$^{b,c}$, Minghui Yang$^d$, Yang Yan$^{b,e,*}$
\end{center}

\begin{center}
\textit{\footnotesize  a.  School of Computer Science and
Technology, Tiangong University, Tianjin,
300387, China.\\
b. Provincial Key Laboratory of Applied Mathematics, Putian University, Putian, Fujian 351100, China\\
c. School of Computer Science and Engineering, University of Electronic Science and Technology of China, Chengdu, Sichuan 611731,  China\\
d. State Key Laboratory of Information Security, Institute of Information Engineering, Chinese Academy of Sciences, Beijing 100195,  China\\
e. School of Information Technology Engineering, Tianjin University of Technology and Education, Tianjin, 300222, China\\
yanyangucas@126.com
}
\end{center}


\noindent\textbf{Abstract:} Sequences with high linear complexity have wide applications in cryptography. In this paper, a new class of quaternary sequences over $\mathbb{F}_4$ with period $2p^mq^n$ is constructed using generalized cyclotomic classes. Results show that the linear complexity of these sequences attains the maximum.

\noindent\textbf{keywords:}  linear complexity, generalized cyclotomic classes, quaternary sequences

\section{Introduction}
Stream ciphers divide the plain-text into characters and encipher each character with a time-varying function. It is konwn that stream cipher   plays a dominant role in cryptographic practice and remains a crucial role in military and commercial secrecy systems. 
The security of   stream ciphers now depends on the ``randomness" of the key stream \cite{s7}.  For the system to be secure, the key stream must have a series of properties: balance, long period, low correlation and et al.

 A necessary requirement for unpredictability is a large linear complexity of the key stream, which is defined to be the length of the shortest linear feedback shift register able to produce the key stream. Let $\mathbb{F}_{l}$ denote a finite field with $l$ elements, where $l$ is a prime power. A sequence $S=\{s_i\}$ is periodic if there exists a positive integer $T$ such that $s_{j+T}= s_{j}$ for all $j\geq0$.
   Let $S=\{s_{i}\}$ be a periodic sequence over  $\mathbb{F}_{l}$.  The linear complexity of  $S$, denoted by $LC(S)$, is the least integer $L$ of a linear recurrence relation over $\mathbb{F}_{l}$  satisfied by $S$,
  $$
  -c_0s_{i+L}=c_{1}s_{i+L-1}+\ldots+c_Ls_{i}\ \ \  \textrm{\ for\ }\ i\geq0,
  $$
  where $c_0\neq0$, $c_0,c_1,\ldots,c_{L-1},c_L\in \mathbb{F}_{l}$. By  B-M algorithm \cite{DXS}, if $LC(S)\geq N/2$ ($N$ is the least period of $S$), then $S$ is considered to be good from the viewpoint of linear complexity.

 Periodic sequences have intensively studied  in the past few years, since they are widely used in CDMA (Code Division Multiple Access), global position systems and stream ciphers.  As special cases, cyclotomic and generalized cyclotomic sequences of different periods and orders have attracted many researchers to deeply explore due to their good pseudo-random  cryptographic properties \cite{AW,CD0,Hu}. In particular, the linear complexity of Legendre sequences and cyclotomic sequences of order $r$ were studied in \cite{DHS} and \cite{DH1}, respectively. Generalized cyclotomy, as a natural generalization of cyclotomy, was presented by Whiteman in \cite{ALW} and Ding et al. in \cite{CDH}. It should be noted that Whiteman's generalized cyclotomy is not in accordance with the classic cyclotomy.   Ding-Helleseth cyclotomy includes the classic cyclotomy as a special case. Whereafter, the linear complexity of generalized cyclotomic sequences have been determined \cite{V1,KJ,YD1,WJ1,WJ3,WJ2}.

Quaternary sequences are also important from the point of many practical applications; please refer to \cite{K}.  Owing to the nice algebraic structure, quaternary sequences also have received a lot of attentions. For instance, a kind of almost quaternary cyclotomic sequences was defined in \cite{TL} and was proved to have ideal autocorrelation property \cite{TL}. A new class of quaternary sequences of length $pq$, constructed by the  inverse Gray mapping, was studied in \cite{YK}.  A family of quaternary sequences of period $2p$ over $\mathbb{F}_4$ was presented and showed to possess high linear complexity \cite{Du}.


Motivated by the idea in \cite{Ke1,Ke2},  we constructed a new class of quaternary sequences over $\mathbb{F}_4$ with period $2p^mq^n$ by using the generalized cyclotomic classes in this paper.   From the definition of  $S$ in (\ref{eq-1.2}), we can easily  see that the new proposed sequences have longer period  contrast to these in \cite{Ke2}.  The linear complexity of these sequences is computed and results  show that the proposed sequences have  high linear complexity.

This paper is organized as follows. In Section $2$, the periodic sequence $S$ with period $2p^{m}q^{n}$ is given. Section $3$ determines the linear complexity of the constructed sequence. Finally, we give some remarks on this paper.

\section{Preliminaries}
For a positive integer $a\geq2$, use $\mathbb{Z}_{a}$ to denote the ring $\mathbb{Z}_{a}=\{0,1,2,\ldots,a-1\}$ with integer addition modulo $a$ and integer multiplication modulo $a$. Usually, we use $\mathbb{Z}_{a}^{*}$ to denote all invertible elements of $\mathbb{Z}_{a}$, i.e.,  all elements $b$ in $\mathbb{Z}_{a}$ satisfying $\gcd(a,b)=1$. Obviously, the group $\mathbb{Z}_{a}^{*}$ has cardinality $\phi(a)$, where $\phi(\cdot)$ denotes the Euler function.

For a subset $A \subset \mathbb{Z}_{a}$ and an element $b\in\mathbb{Z}_{a}$, define
\begin{align*}
b+A=A+b=\{a+b:a\in A\},\ \ \ \ \ bA=\{ab:a\in A\},
\end{align*}
where addition and multiplication refer to those in $\mathbb{Z}_{a}$.

Let $p$ and $q$ be two distinct odd primes.  Let  $m$ and $n$ denote  two positive integers. Suppose that $g_1$ is a primitive element of $\mathbb{Z}_{p^2}^{*}$. Then $g_1$ is  a primitive root of $\mathbb{Z}_{p^m}^{*}$ for $m\geq1$ \cite{B}. Without loss of generality,  assume  $g_1$ is an odd integer. It is known that $g_1$ is also a primitive root of $\mathbb{Z}_{2p^m}^{*}$ \cite{B}. Obviously, $g_1$ is a common primitive root of $\mathbb{Z}_{p^{i}}$ and $\mathbb{Z}_{2p^{i}}$ for all $1\leq i\leq m$. By the same argument, there exists an integer $g_2$ such that $g_2$ is a common primitive root of $\mathbb{Z}_{q^{j}}$ and $\mathbb{Z}_{2q^{j}}$ for any $1\leq j\leq n$.

\begin{lemma}[\cite{PSD}] \label{lem-1}
Let $m_1,\cdots,m_t$ be positive integers. For a set of integers $a_1,\cdots,a_t$, the system of congruences
$$
x\equiv a_i \bmod{m_i}, \ \ i=1,\cdots t
$$
has solutions if and only if
\begin{equation}\label{eq-1.1}
a_i\equiv a_{j}\bmod{\gcd(m_i,m_j)}, \ \ i\neq j, \ 1\leq i,j\leq t.
\end{equation}
If \eqref{eq-1.1} is satisfied, the solution is unique modulo $\textrm{lcm}(m_1,\cdots,m_t)$.
\end{lemma}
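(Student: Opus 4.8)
The plan is to establish necessity and sufficiency separately and then read off the uniqueness claim. For necessity, suppose $x$ solves the whole system and fix a pair $i\neq j$. Since $\gcd(m_i,m_j)$ divides both $m_i$ and $m_j$, reducing $x\equiv a_i \pmod{m_i}$ and $x\equiv a_j \pmod{m_j}$ modulo $\gcd(m_i,m_j)$ gives $a_i\equiv x\equiv a_j \pmod{\gcd(m_i,m_j)}$, which is exactly the compatibility condition \eqref{eq-1.1}. This direction is immediate and needs no hypothesis beyond the existence of a solution.

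For sufficiency I would localize at each prime. Write $L=\mathrm{lcm}(m_1,\ldots,m_t)$ and let $\ell$ range over the primes dividing $L$. For each $i$ let $\ell^{e_i}$ be the exact power of $\ell$ in $m_i$. By the classical Chinese Remainder Theorem applied to the coprime prime powers dividing $m_i$, the single congruence $x\equiv a_i \pmod{m_i}$ is equivalent to the family $\{x\equiv a_i \pmod{\ell^{e_i}}\}_\ell$. Hence the entire system is equivalent to the collection, over all primes $\ell\mid L$, of the local systems $x\equiv a_i \pmod{\ell^{e_i}}$ for $1\le i\le t$.

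The crux is to show that each local system is consistent and collapses to a single congruence. Fix $\ell$ and pick an index $k$ with $e_k=\max_i e_i$. I claim that, under \eqref{eq-1.1}, the local system is equivalent to the single congruence $x\equiv a_k \pmod{\ell^{e_k}}$. Indeed, for any $i$ the $\ell$-part of $\gcd(m_i,m_k)$ is $\ell^{\min(e_i,e_k)}=\ell^{e_i}$, so \eqref{eq-1.1} yields $a_i\equiv a_k \pmod{\ell^{e_i}}$; therefore $x\equiv a_k \pmod{\ell^{e_k}}$ forces $x\equiv a_k\equiv a_i \pmod{\ell^{e_i}}$, recovering the $i$-th local congruence, while the $k$-th local congruence is literally $x\equiv a_k\pmod{\ell^{e_k}}$. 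Once every local system is reduced to one prime-power congruence, the surviving congruences have pairwise coprime moduli $\ell^{\max_i e_i}$ whose product is exactly $L$, so the classical Chinese Remainder Theorem furnishes a solution $x$, unique modulo $L$. The uniqueness can also be seen directly: two solutions differ by a multiple of every $m_i$, hence by a multiple of $L=\mathrm{lcm}(m_1,\ldots,m_t)$.

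The main obstacle is precisely this reduction step at each prime: one must use that \eqref{eq-1.1} is assumed for all pairs, not merely consecutive ones, and keep careful track of which exponent dominates, so that compatibility modulo $\gcd(m_i,m_k)$ really delivers agreement modulo the full $\ell^{e_i}$. An alternative, for those who prefer to avoid the global prime-by-prime bookkeeping, is induction on $t$: solve the first two congruences via a Bézout relation $u m_1+v m_2=\gcd(m_1,m_2)$ together with $\gcd(m_1,m_2)\mid a_2-a_1$, and then fold in $a_3,\ldots,a_t$ one at a time. In that route the delicate point becomes verifying the single compatibility $y\equiv a_t \pmod{\gcd(\mathrm{lcm}(m_1,\ldots,m_{t-1}),m_t)}$ for the running solution $y$, which is again proved by localizing at primes exactly as above.
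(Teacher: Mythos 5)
The paper offers no proof of this lemma at all: it is quoted verbatim from the reference \cite{PSD} (Pei--Salomaa--Ding), so there is nothing internal to compare against. Your argument is a correct and complete proof of the generalized Chinese Remainder Theorem. The necessity direction is exactly as it should be. For sufficiency, the prime-by-prime localization is sound: the key observation that the $\ell$-adic valuation of $\gcd(m_i,m_k)$ equals $\min(e_i,e_k)=e_i$ when $k$ maximizes the exponent, so that \eqref{eq-1.1} delivers $a_i\equiv a_k\pmod{\ell^{e_i}}$ and the local system collapses to the single dominant congruence, is the heart of the matter and you state it precisely. The reduction to pairwise coprime moduli $\ell^{\max_i e_i}$ with product $L=\mathrm{lcm}(m_1,\ldots,m_t)$ then hands everything to the classical CRT, and your direct uniqueness argument (two solutions differ by a common multiple of all $m_i$, hence by a multiple of $L$) is clean. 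The alternative inductive route via B\'ezout that you sketch is the proof given in most textbooks, including the cited reference; either version would serve. One trivial remark: in the localization step you implicitly use that a congruence modulo $\gcd(m_i,m_k)$ implies the same congruence modulo any divisor of it, in particular modulo $\ell^{e_i}$ --- worth saying in a written-out version, but not a gap.
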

Let $g$ be the unique solution of the following congruence equations
\begin{align*}
\left\{
\begin{array}{l}
  g\equiv g_1\pmod{2p^{m}}, \\
  g\equiv g_2\pmod{2q^{n}}.
\end{array}
\right.
\end{align*}
Lemma \ref{lem-1} guaranteed the existence and uniqueness of the common primitive root $g$ of $p^{i}$, $2p^{i}$, $p^{j}$ and $2p^{j}$. Similarly, there exists a unique integer $y$ satisfying the following system of  congruences
\begin{align}\label{eq-y}
\left\{
\begin{array}{c}
   y\equiv g\pmod{2p^{m}},\\
  y\equiv1\pmod{2q^{n}}.
\end{array}
\right.
\end{align}

Assume that $e_{i,j}=\gcd(p^{i-1}(p-1), q^{j-1}(q-1))$ and $d_{i,j}=\frac{(p-1)p^{i-1}(q-1)q^{j-1}}{e_{i,j}}$. Then  $d_{i,j}$ is the least positive integer that satisfies $g^{d_{i,j}}\equiv 1(\bmod p^iq^j)$ (\cite{CDH}, Lemma 2), i.e., $\textrm{ord}_{p^iq^j}(g)=d_{i, j}$. In the sequel, let $i$ and $j$ be two integers with $1\leq i\leq m$ and $1\leq j\leq n$. The  generalized cyclotomic classes with respect to $p^iq^j$,  similar to the Ding- Helleseth's generalized cyclotomic classes (\cite{CDH}), are defined  as follows
\begin{align}
&D_0^{(p^iq^j)}=\left\{g^{2t}y^k\textrm{\ mod\ }p^iq^j: t=0, 1, \ldots, d_{i,j}/2-1; k=0, 1, \ldots, e_{i,j}-1\right\}, \label{eq-D0}\\
&D_1^{(p^iq^j)}=gD_0^{(p^iq^j)}(\bmod \ p^iq^j)\label{eq-D1}.
\end{align}
By Lemma 7 in \cite{ALW}, we get $\mathbb{Z}_{p^{i}q^{j}}^{\ast}=D_0^{(p^iq^j)}\bigcup D_1^{(p^iq^j)}$.
Let
\begin{align*}
&D_0^{(2p^iq^j)}=\left\{g^{2t}y^k \textrm{\ mod\ }2p^iq^j: t=0, 1, \ldots, d_{i,j}/2-1; k=0, 1, \ldots, e_{i,j}-1\right\},\\
&D_1^{(2p^iq^j)}=gD_0^{(2p^iq^j)}(\bmod \ 2p^iq^j).
\end{align*}
Similarly, we have $\mathbb{Z}_{2p^{i}q^{j}}^{\ast}=D_0^{(2p^iq^j)}\bigcup D_1^{(2p^iq^j)}$. For abbreviation, denote $H_h^{(2p^iq^j)}=p^{m-i}q^{n-j}D_h^{(2p^iq^j)}$ and $H_h^{(p^iq^j)}=p^{m-i}q^{n-j}D_h^{(p^iq^j)}$ for $h=0,1$.
With the above preparations, we get a partition of $\mathbb{Z}_{2p^{m}q^{n}}$ as follows:
\begin{equation*}\begin{split}
\mathbb{Z}_{2p^{m}q^{n}}  = & \bigcup_{i=1}^{m}\bigcup_{j=1}^{n}\bigcup_{h=0}^{1}p^{m-i}q^{n-j}\left(D_h^{(2p^iq^j)}\bigcup2D_h^{(p^iq^j)}\right)\\
                                    & \bigcup_{j=1}^{n}\bigcup_{h=0}^{1}p^{m}q^{n-j}\left(D_h^{(2q^j)}\bigcup2 D_h^{(q^j)}\right)\\
                                    & \bigcup_{i=1}^{m}\bigcup_{h=0}^{1}p^{m-i}q^{n}\left(D_h^{(2p^i)}\bigcup2 D_h^{(p^i)}\right)\\
                                    &\bigcup\left\{0, p^{m}q^{n}\right\}.
\end{split}\end{equation*}

Let $\mathbb{F}_4=\{0, 1, \alpha, \alpha^2\}$ be the finite field with $4$ element, where $\alpha$ satisfies $\alpha^{2}=\alpha+1$. A class of quaternary sequence can be given by allocating each elements of $\mathbb{F}_4$ to each generalized cyclotomic class with respect to $2p^{m}q^{n}$. To ensure the constructed sequence has high linear complexity, we should technologically do with it.

Let $\{ a, b, c, d\}$ be a set of four tuple over $\mathbb{F}_4$ and the elements in this tuple are pairwise distinct. A quaternary generalized cyclotomic sequence $S=\{s_i\}$ of period
$2p^{m}q^{n}$ is  defined as
\begin{align}\label{eq-1.2}
 s_i
 = \left\{ \begin{array}{ll}
0, & \textrm{if $i=0 $},\\
e, & \textrm{if $i=p^mq^n$},\\
a, & \textrm{if $i\in \bigcup_{i=1}^m\bigcup_{j=1}^nH_0^{(2p^iq^j)}\bigcup_{i=1}^mH_0^{(2p^i)}\bigcup_{j=1}^nH_0^{(2q^j)}$},\\
b, & \textrm{if $i\in \bigcup_{i=1}^m\bigcup_{j=1}^nH_1^{(2p^iq^j)}\bigcup_{i=1}^mH_1^{(2p^i)}\bigcup_{j=1}^nH_1^{(2q^j)}$},\\
c, & \textrm{if $i\in \bigcup_{i=1}^m\bigcup_{j=1}^n2H_0^{(p^iq^j)}\bigcup_{i=1}^m2H_0^{(p^i)}\bigcup_{j=1}^n2H_0^{(q^j)}$},\\
d, & \textrm{if $i\in \bigcup_{i=1}^m\bigcup_{j=1}^n2H_1^{(p^iq^j)}\bigcup_{i=1}^m2H_1^{(p^i)}\bigcup_{j=1}^n2H_1^{(q^j)}$},
\end{array} \right.
\end{align}
where $e\neq b+d$ and $e\in\mathbb{F}_4^{\ast}$ if $p\equiv \pm 1\pmod 8$, $e\not\in \{b,b+c\}$ and $e\in\mathbb{F}_4^{\ast}$ if $p\equiv \pm 3\pmod 8$. It is easily seen that the sequence $S=\{s_i\}$ is balanced.

\section{Linear complexity of the constructed sequences}
In generating of running keys, linear feedback shift register (LFSR) is one of the most useful devices. And it is shown that every periodic sequence can be generated by LFSR. For researchers, what they most concern is the shortest length of  LFSR that could produce a given sequence $S$, which is referred  to the linear complexity of $S$.

Let $S=\{s_i\}$ be a periodic sequence over the finite field $\mathbb{F}_l$ of period $N$. We first recall  the definition of linear complexity of periodic sequences that is given in Sect. $1$.  The \emph{linear complexity} of $S$ over $\mathbb{F}_{l}$, denoted by  $LC(S)$, is the smallest positive integer $L$ satisfying the following linear recurrence relation
\begin{equation}\label{eq-1.3}
-c_0s_{i+L}=c_{1}s_{i+L-1}+\cdots+c_Ls_{i},\ \  \textrm{\ for\ }\ i\geq0,
\end{equation}
where $c_0\neq0$, $c_0,c_1,\ldots,c_{L-1},c_L\in \mathbb{F}_{l}$. The polynomial
$$
c(x)=c_Lx^{L}+c_{L-1}x^{L-1}+\cdots+c_{1}x+c_{0}\  \in \mathbb{F}_{l}[x]
$$
associated with the linear recurrence relation \eqref{eq-1.3} is called the characteristic polynomial of $S$. A characteristic
polynomial with the smallest degree is called a \emph{minimal polynomial} of $S$ \cite{DXS}.
For the periodic sequence $S$, let $S(x)=s_0+s_1x+\cdots + s_{N-1}x^{N-1}\in \mathbb{F}_{l}[x]$, which is called the \emph{generating polynomial} of $S$.
The following lemma gives a method to computer the linear complexity of $S$ by using the generating polynomial $S(x)$.
\begin{lemma}[\cite{CDR}]\label{lem1}
Let $S$ be a  sequence over $\mathbb{F}_l$ of period $N$.  Then the
minimal polynomial $m(x)$ of $S$  is
$$
m(x)=\frac{x^{N}-1}{\gcd(x^{N}-1,S(x))},
$$
and the linear complexity $LC(S)$ of $S$ is given by
\begin{equation*}
N-\deg\left(\gcd(x^{N}-1,S(x))\right),
\end{equation*}
where $S(x)$ is the generating polynomial of $S$.
\end{lemma}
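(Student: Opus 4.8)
The plan is to exploit the standard dictionary between periodic sequences and rational generating functions over $\mathbb{F}_l$. First I would form the formal generating function $G(x)=\sum_{i\geq 0}s_ix^{i}\in\mathbb{F}_l[[x]]$ and observe that a polynomial $c(x)=\sum_{k=0}^{L}c_kx^{k}$ with $c_L\neq 0$ is a characteristic polynomial of $S$ exactly when the recurrence \eqref{eq-1.3} forces the coefficient of $x^{n}$ in the product $c(x)G(x)$ to vanish for every $n\geq L$; equivalently, $c(x)G(x)=P(x)$ for some $P(x)\in\mathbb{F}_l[x]$ with $\deg P<\deg c$. This is nothing but a reindexing of \eqref{eq-1.3}: writing out the convolution, the coefficient of $x^{i+L}$ in $c(x)G(x)$ is $\sum_{k=0}^{L}c_k s_{i+L-k}$, which is precisely the left-hand side of the recurrence, so the equivalence is a bookkeeping exercise.

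Next I would bring in periodicity. Since $s_{j+N}=s_j$, the series telescopes as $G(x)=S(x)\bigl(1+x^{N}+x^{2N}+\cdots\bigr)=S(x)/(1-x^{N})$, where $S(x)$ is the generating polynomial of degree at most $N-1$. Substituting this into the criterion above converts the statement ``$c(x)$ is a characteristic polynomial'' into the divisibility condition $(1-x^{N})\mid c(x)S(x)$ together with the degree bound $\deg\bigl(c(x)S(x)/(1-x^{N})\bigr)<\deg c$.

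The crux is then to identify the characteristic polynomial of least degree. I would set $d(x)=\gcd(x^{N}-1,S(x))$ and write $x^{N}-1=d(x)f(x)$ and $S(x)=d(x)S_1(x)$ with $\gcd(f(x),S_1(x))=1$. The divisibility $(1-x^{N})\mid c(x)S(x)$ then reduces to $f(x)\mid c(x)S_1(x)$, and coprimality of $f$ and $S_1$ collapses this to $f(x)\mid c(x)$. Hence the characteristic polynomials are exactly the multiples of $f(x)$, so the nonzero one of least degree is $f(x)$ itself, of degree $N-\deg d$. I would confirm that $c(x)=f(x)$ genuinely satisfies the recurrence by checking the degree bound: here $P(x)=f(x)S(x)/(1-x^{N})=S_1(x)$, and $\deg S_1\leq N-1-\deg d<N-\deg d=\deg f$, so the bound holds. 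Normalising $f(x)$ to be monic yields $m(x)=(x^{N}-1)/\gcd(x^{N}-1,S(x))$ as the minimal polynomial, and therefore $LC(S)=\deg m(x)=N-\deg\gcd(x^{N}-1,S(x))$.

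I expect the main obstacle to lie in the first paragraph, namely pinning down the precise equivalence between the recurrence \eqref{eq-1.3} and the polynomial identity $c(x)G(x)=P(x)$ with the correct indexing and degree constraint, since a sign, indexing, or reciprocal-polynomial convention slip there would propagate through the entire argument. Once that equivalence and the telescoping $G(x)=S(x)/(1-x^{N})$ are secured, the remaining steps are routine gcd manipulations and a single degree estimate.
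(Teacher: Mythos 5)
Your argument is correct and is essentially the standard proof of this fact; the paper itself offers no proof, citing \cite{CDR}, and the argument there is exactly this generating-function computation ($G(x)=S(x)/(1-x^{N})$ followed by the factorization through $d(x)=\gcd(x^{N}-1,S(x))$). The one convention wrinkle you worried about --- the paper normalizes characteristic polynomials by $c_0\neq 0$ while you assume $c_L\neq 0$ --- is immaterial here, since the minimal polynomial you exhibit divides $x^{N}-1$ and therefore has both nonzero constant term and full degree, so the two normalizations pick out the same minimal object.
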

\begin{lemma}[Lemma 2, \cite{V1} and Lemma 1, \cite{Ke1}]\label{lem2}
Let notations be defined as above. Then for $1\leq i\leq m$ and $h=0,1$, we have
\begin{enumerate}
\item  $D_{h}^{(p^{i})}=\{x+py:x\in D_h^{(p)},y\in \mathbb{Z}_{p^{i-1}}\}$;
\item $D_{h}^{(2p^{i})}=\{x+py+\delta_{x,y}:x\in D_h^{(p)},y\in \mathbb{Z}_{p^{i-1}}\}$,
where
$$
\delta_{x,y}=\left\{\begin{array}{ll}
                      0, & \textrm{if\ $x+py$\ is odd},  \\
                      p^{i}, & \textrm{otherwise}.
                    \end{array}
                    \right.
$$
\end{enumerate}
\end{lemma}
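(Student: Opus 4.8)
The plan is to reduce both identities to a single fact, namely that reduction modulo $p$ preserves membership in the cyclotomic classes, and then to bookkeep residues and parities. Throughout I use that $g$ (equivalently $g_1$ on the $p$-part) is a common primitive root of $p$, of $p^i$ and of $2p^i$, so that $D_0^{(p^i)}$ and $D_1^{(p^i)}$ are exactly the even- and odd-power cosets of $\langle g\rangle$, i.e. the quadratic residues and non-residues modulo $p^i$, and likewise modulo $p$ and modulo $2p^i$.

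For part (1) I would first record the characterization $z\in D_h^{(p^i)}\iff (z\bmod p)\in D_h^{(p)}$ for every $z\in\mathbb{Z}_{p^i}^{\ast}$. This holds because reduction modulo $p$ is a homomorphism sending the primitive root $g\bmod p^i$ to the primitive root $g\bmod p$; writing $z=g^k$, and using that both $p-1$ and $p^{i-1}(p-1)$ are even, the parity of $k$ is preserved, so $z$ is a square modulo $p^i$ exactly when $g^k\bmod p$ is a square modulo $p$. Next I would observe that every element of the right-hand side set has the form $x+py$ with $x\in\{1,\dots,p-1\}$ and $y\in\{0,\dots,p^{i-1}-1\}$, hence these integers are pairwise distinct, coprime to $p$, and lie in $\{1,\dots,p^i-1\}$; since $(x+py)\bmod p=x\in D_h^{(p)}$, the characterization places each of them in $D_h^{(p^i)}$. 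A cardinality count then closes the argument: the right-hand side has $\frac{p-1}{2}\cdot p^{i-1}=|D_h^{(p^i)}|$ elements, so the inclusion is an equality; conversely, writing any $z\in D_h^{(p^i)}$ in base $p$ as $z=x+py$ exhibits it on the right.

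For part (2) I would transport part (1) across the isomorphism $\mathbb{Z}_{2p^i}^{\ast}\cong\mathbb{Z}_{p^i}^{\ast}$ given by reduction modulo $p^i$, whose existence and bijectivity follow from Lemma \ref{lem-1} since $\gcd(2,p^i)=1$ and $\mathbb{Z}_2^{\ast}$ is trivial. Under this map $g\bmod 2p^i\mapsto g\bmod p^i$ and even/odd powers correspond, so $D_h^{(2p^i)}$ is precisely the set of representatives in $\{1,\dots,2p^i-1\}$ reducing modulo $p^i$ into $D_h^{(p^i)}$; moreover, as $g$ is odd these representatives are themselves odd. The remaining point is the lift: a residue $r$ modulo $p^i$ has exactly the two lifts $r$ and $r+p^i$ modulo $2p^i$, and since $p^i$ is odd exactly one of them is odd. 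Combining with part (1), each element of $D_h^{(2p^i)}$ is the odd lift of some $x+py$, equal to $x+py$ when $x+py$ is odd and to $x+py+p^i$ when $x+py$ is even, which is exactly the correction $\delta_{x,y}$.

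I expect the main obstacle to be the parity bookkeeping in part (2): one must verify that the odd-lift prescription is well defined and genuinely matches $\delta_{x,y}$ for every pair $(x,y)$, and that distinct pairs still yield distinct elements after adding the correction term, so that the counting argument from part (1) carries over intact. The quadratic-residue preservation and the CRT isomorphism are routine once the common primitive root $g$ is fixed, so the real work is confined to tracking residues and parities simultaneously.
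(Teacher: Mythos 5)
Your argument is correct. Note that the paper itself offers no proof of this lemma: it is imported verbatim from the cited references (Lemma~2 of \cite{V1} and Lemma~1 of \cite{Ke1}), so there is no in-paper proof to compare against. Your reduction-mod-$p$ characterization of membership in $D_h^{(p^i)}$ (valid because $g$ is a common primitive root and $p^{i-1}(p-1)\equiv p-1\equiv 0\pmod 2$ preserves the parity of exponents), followed by the base-$p$ counting for part (1) and the odd-lift argument through $\mathbb{Z}_{2p^i}^{\ast}\cong\mathbb{Z}_{p^i}^{\ast}$ for part (2), is exactly the standard derivation in those references, and the parity bookkeeping you flag as the main obstacle does go through since $p^i$ is odd, so each residue mod $p^i$ has a unique odd lift mod $2p^i$ and distinctness of the pairs $(x,y)$ is preserved.
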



For the generalized cyclotomic classes $D_h^{(p^iq^j)}$ and $D_h^{(2p^iq^j)}$ corresponding to $p^iq^j$ and $2p^{i}q^{j}$, we have the following lemma.
\begin{lemma}[Lemma1, \cite{Yang}]\label{lem-2}
 For $1\leq i\leq m$ and $1\leq j\leq n$, we have
\begin{enumerate}
\item  $D_h^{(p^iq^j)}=\left\{a+pqb: a\in D_h^{(pq)}, b\in \mathbb{Z}_{p^{i-1}q^{j-1}}\right\}$;
\item  $D_h^{(2p^iq^j)}=\left\{a+pqb+\delta_{a,b}: a\in D_h^{(pq)}, b\in \mathbb{Z}_{p^{i-1}q^{j-1}}\right\},$
where
\begin{equation*}\begin{split}
\delta_{a,b}=\begin{cases} 0, & \textrm{if $a+bpq$ is odd,} \\
p^iq^j, & \textrm{otherwise.}
\end{cases}
\end{split}\end{equation*}
\end{enumerate}
\end{lemma}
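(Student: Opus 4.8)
The plan is to deduce both parts from a single structural fact, in direct analogy with Lemma~\ref{lem2}: an element $z\in\mathbb{Z}_{p^iq^j}^{*}$ lies in $D_h^{(p^iq^j)}$ if and only if its reduction $z\bmod pq$ lies in $D_h^{(pq)}$. To prove this I would work with the reduction map $\pi\colon\mathbb{Z}_{p^iq^j}^{*}\to\mathbb{Z}_{pq}^{*}$, $z\mapsto z\bmod pq$, a surjective group homomorphism whose kernel has order $\phi(p^iq^j)/\phi(pq)=p^{i-1}q^{j-1}$. Since $p$ and $q$ are odd this order is odd, so the image of $\ker\pi$ in the quotient $\mathbb{Z}_{p^iq^j}^{*}/D_0^{(p^iq^j)}\cong\mathbb{Z}_2$ is trivial; that is, $\ker\pi\subseteq D_0^{(p^iq^j)}$. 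The congruences \eqref{eq-y} defining $g$ and $y$ give $\pi(g)=g\bmod pq$ and $\pi(y)=y\bmod pq$, which are exactly the generators used to build $D_0^{(pq)}$; hence $\pi$ maps $D_0^{(p^iq^j)}=\langle g^2,y\rangle$ onto $\langle \pi(g)^2,\pi(y)\rangle=D_0^{(pq)}$, and since $\ker\pi\subseteq D_0^{(p^iq^j)}$ this upgrades to $D_0^{(p^iq^j)}=\pi^{-1}(D_0^{(pq)})$. Multiplying through by $g$ transports the statement to $h=1$, establishing the claimed equivalence for both $h$.

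With the equivalence in hand, Part~1 is immediate: each $a+pqb$ with $a\in D_h^{(pq)}$ and $b\in\mathbb{Z}_{p^{i-1}q^{j-1}}$ is coprime to $p^iq^j$ and reduces mod $pq$ to $a\in D_h^{(pq)}$, hence lies in $D_h^{(p^iq^j)}$; and $(a,b)\mapsto a+pqb$ is injective because $0\le a<pq$ forces $a_1=a_2$ and $b_1=b_2$ whenever $a_1+pqb_1=a_2+pqb_2$, so the image has size $\tfrac{\phi(pq)}{2}\cdot p^{i-1}q^{j-1}=\tfrac{\phi(p^iq^j)}{2}=|D_h^{(p^iq^j)}|$ and the two sets coincide. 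For Part~2 I would use $\gcd(2,p^iq^j)=1$: reduction modulo $p^iq^j$ is a bijection $\mathbb{Z}_{2p^iq^j}^{*}\to\mathbb{Z}_{p^iq^j}^{*}$ (its only candidate nontrivial kernel element $1+p^iq^j$ is even, hence not a unit mod $2p^iq^j$), and every element of $D_h^{(2p^iq^j)}$ is odd and reduces into $D_h^{(p^iq^j)}$, so by matching cardinalities $D_h^{(2p^iq^j)}$ is exactly the set of odd representatives of $D_h^{(p^iq^j)}$. Since each class of $D_h^{(p^iq^j)}$ is named by some $a+pqb\in[0,p^iq^j)$ from Part~1, and $p^iq^j$ is odd, the unique odd representative modulo $2p^iq^j$ equals $a+pqb$ when that is odd and $a+pqb+p^iq^j$ otherwise, i.e.\ $a+pqb+\delta_{a,b}$, which is the second formula.

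The cardinality count and the parity bookkeeping are routine; the load-bearing step is the class-preservation equivalence, and within it the one point to pin down carefully is that $D_0^{(p^iq^j)}$ genuinely is the index-$2$ subgroup $\langle g^2,y\rangle$, so that the odd-order kernel argument applies and so that image-of-generators equals $D_0^{(pq)}$. This is what Whiteman's Lemma~7 in \cite{ALW}, together with the definitions \eqref{eq-D0}--\eqref{eq-D1} and the observation that $d_{i,j}$ is even, provides; I expect confirming this subgroup structure, rather than the subsequent reductions, to be the main obstacle.
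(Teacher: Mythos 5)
Your proof is correct. Note, however, that the paper does not prove this lemma at all: it is imported verbatim as Lemma~1 of \cite{Yang} (just as Lemma~\ref{lem2} is imported from \cite{V1} and \cite{Ke1}), so there is no in-paper argument to compare yours against. Your reduction-map argument is sound as a self-contained proof: the homomorphism $\pi\colon\mathbb{Z}_{p^iq^j}^{*}\to\mathbb{Z}_{pq}^{*}$ has kernel of odd order $p^{i-1}q^{j-1}$, so it must land inside the index-$2$ subgroup $D_0^{(p^iq^j)}$, and the generator-matching plus cardinality count gives $D_0^{(p^iq^j)}=\pi^{-1}\bigl(D_0^{(pq)}\bigr)$; Part~1 then follows from the counting argument and Part~2 from the bijection $\mathbb{Z}_{2p^iq^j}^{*}\to\mathbb{Z}_{p^iq^j}^{*}$ together with the parity of $p^iq^j$. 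You correctly identify the one genuinely load-bearing input, namely that $D_0^{(p^iq^j)}$ is the subgroup $\langle g^2,y\rangle$ of index $2$ (equivalently, that the Whiteman representation $g^ty^k$ is unique and $d_{i,j}$ is even, the latter holding because $d_{i,j}=\mathrm{lcm}\bigl(p^{i-1}(p-1),q^{j-1}(q-1)\bigr)$ and $p-1$ is even); this is exactly what Lemma~7 of \cite{ALW} and Lemma~2 of \cite{CDH} supply, so deferring to them is legitimate.
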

\begin{lemma}[\cite{CDH}]\label{lem-3}
$2\in D_{h}^{(p)}$ if and only if $2\in D_{h}^{(p^{i})}$ for $1\leq i\leq m$ and $h=0,1$.
\end{lemma}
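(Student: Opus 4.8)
The plan is to convert the statement ``$2$ lies in the class indexed by $h$'' into a parity condition on a discrete logarithm, and then to observe that this parity does not change when the modulus is reduced from $p^i$ to $p$. Since $g$ is a common primitive root of $p$ and $p^i$, and since the classes specialised to the pure prime-power modulus are just the even and odd powers of $g$ (i.e. the quadratic residues $D_0^{(p^i)}$ and non-residues $D_1^{(p^i)}$), the first step is to write $2 \equiv g^{a} \pmod{p^i}$ with $0 \le a < \phi(p^i) = p^{i-1}(p-1)$; then $2 \in D_h^{(p^i)}$ holds exactly when $a \equiv h \pmod 2$. Note at the outset that $\gcd(2, p^i) = 1$, so $2 \in \mathbb{Z}_{p^i}^{*}$ and the statement is meaningful for both $h=0$ and $h=1$.

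Next I would reduce modulo $p$. Because $g$ stays a primitive root of $\mathbb{Z}_p^{*}$, which has order $p-1$, the discrete logarithm of $2$ modulo $p$ equals $a \bmod (p-1)$, so that $2 \in D_h^{(p)}$ precisely when $a \bmod (p-1) \equiv h \pmod 2$. The one arithmetic fact that makes everything work is that $p-1$ is even (as $p$ is an odd prime): writing $a = k(p-1) + (a \bmod (p-1))$ shows $a - (a \bmod (p-1))$ is a multiple of the even number $p-1$, hence $a$ and $a \bmod (p-1)$ share the same parity. Therefore $a \equiv h \pmod 2 \iff a \bmod (p-1) \equiv h \pmod 2$, which is exactly the equivalence $2 \in D_h^{(p^i)} \iff 2 \in D_h^{(p)}$.

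A shorter route, available directly from the material already in the excerpt, uses Lemma \ref{lem2}(1): $D_h^{(p^i)} = \{x + py : x \in D_h^{(p)},\, y \in \mathbb{Z}_{p^{i-1}}\}$. Since $p \ge 3$, the integer $2$ admits the unique representation $2 + p\cdot 0$ of this shape, forcing $x = 2$ and $y = 0$; hence $2 \in D_h^{(p^i)}$ if and only if $x = 2 \in D_h^{(p)}$, and both implications follow immediately. Either way there is no real obstacle: the only point needing care is the parity bookkeeping in the first approach --- confirming that dropping from modulus $p^i$ to $p$ (equivalently, reducing the exponent modulo $p-1$) cannot flip the parity of the discrete logarithm --- which is secured solely by $p-1$ being even.
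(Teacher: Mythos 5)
Your proof is correct. Note first that the paper offers no proof of this lemma at all --- it is simply quoted from \cite{CDH} --- so there is no internal argument to compare against; what matters is that your reasoning stands on its own, and it does. Both of your routes are sound. The discrete-logarithm route is the standard one: writing $2\equiv g^{a}\pmod{p^{i}}$, membership $2\in D_h^{(p^i)}$ is exactly the condition $a\equiv h\pmod 2$ (since $D_0^{(p^i)}$ consists of the even powers of the common primitive root $g$ and $D_1^{(p^i)}=gD_0^{(p^i)}$), and passing to the modulus $p$ replaces $a$ by $a\bmod(p-1)$, which has the same parity because $p-1$ is even --- you correctly isolate this as the one arithmetic fact carrying the whole equivalence. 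Your second route, via Lemma \ref{lem2}(1), is shorter and entirely self-contained within the paper: in the decomposition $D_h^{(p^i)}=\{x+py:x\in D_h^{(p)},\,y\in\mathbb{Z}_{p^{i-1}}\}$ each element of $\mathbb{Z}_{p^i}^{*}$ has a unique representative of this form, and for the element $2$ (with $p\geq 3$) this forces $x=2$ and $y=0$, so both implications drop out at once. Either argument is acceptable; the second has the advantage of relying only on material already stated in the paper, while the first generalizes more readily (e.g.\ it is the same parity bookkeeping that underlies Lemma \ref{lem-5}).
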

\begin{lemma}[\cite{B}]\label{lem-4}
Let symbols be the same as before. Then we have
\begin{enumerate}
\item $2\in D_{0}^{(p)}$ if and only if $p\equiv \pm1\pmod{8}$;
\item $2\in D_{1}^{(p)}$ if and only if  $p\equiv\pm3\pmod{8}$.
\end{enumerate}
\end{lemma}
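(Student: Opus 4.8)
The plan is to recognize that for the single prime modulus $p$ the order-two cyclotomic classes collapse to the classical quadratic residues and nonresidues, and then to identify the statement with the second supplementary law of quadratic reciprocity. Since $\mathbb{Z}_p^{*}$ is cyclic and is generated by the reduction of the common primitive root $g$ modulo $p$ (recall $g\equiv g_1\pmod{2p^m}$, so $g$ is a primitive root of $\mathbb{Z}_p^{*}$), the class $D_0^{(p)}$ consists exactly of the even powers $g^{2t}$, i.e.\ the quadratic residues modulo $p$, while $D_1^{(p)}=gD_0^{(p)}$ is the set of quadratic nonresidues. Hence $2\in D_0^{(p)}$ is equivalent to $\left(\frac{2}{p}\right)=1$ and $2\in D_1^{(p)}$ to $\left(\frac{2}{p}\right)=-1$, and the two items become the single assertion that $2$ is a quadratic residue modulo $p$ if and only if $p\equiv\pm1\pmod 8$.

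To evaluate $\left(\frac{2}{p}\right)$ I would invoke Gauss's lemma: for $a$ coprime to $p$ one has $\left(\frac{a}{p}\right)=(-1)^{\mu}$, where $\mu$ is the number of integers $k$ with $1\le k\le(p-1)/2$ whose least positive residue of $ak$ modulo $p$ exceeds $p/2$. Taking $a=2$, each product $2k$ already satisfies $0<2k<p$, so $\mu$ is simply the number of $k\in\{1,\ldots,(p-1)/2\}$ with $2k>p/2$, that is $k>p/4$. This gives the closed form
\[
\mu=\frac{p-1}{2}-\left\lfloor \frac{p}{4}\right\rfloor .
\]

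It then remains to determine the parity of $\mu$ in each residue class modulo $8$. Writing $p=8\ell+r$ with $r\in\{1,3,5,7\}$ and substituting into the formula, I would verify directly that $\mu$ is even when $r\in\{1,7\}$ and odd when $r\in\{3,5\}$; for instance $p=8\ell+1$ gives $\mu=4\ell-2\ell=2\ell$, whereas $p=8\ell+3$ gives $\mu=(4\ell+1)-2\ell=2\ell+1$. Since $r\in\{1,7\}$ is precisely $p\equiv\pm1\pmod 8$ and $r\in\{3,5\}$ is $p\equiv\pm3\pmod 8$, this yields $\left(\frac{2}{p}\right)=1$ exactly when $p\equiv\pm1\pmod 8$, which after the identification of the first paragraph is the pair of claimed equivalences.

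There is no deep obstacle here, as this is a classical fact; the only points requiring care are the case-by-case parity bookkeeping in the floor-function expression for $\mu$, and the mildly subtle preliminary step of confirming that the abstract classes $D_h^{(p)}$ defined through the primitive root $g$ genuinely coincide with the residue/nonresidue sets, so that Euler's criterion and Gauss's lemma may legitimately be applied.
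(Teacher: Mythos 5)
Your proposal is correct, but note that the paper does not prove this lemma at all: it is stated with a citation to Burton's \emph{Elementary Number Theory}, since it is just the second supplementary law of quadratic reciprocity dressed in the notation of the order-two cyclotomic classes. Your two-step argument is the standard self-contained derivation: first the observation that for the prime modulus $p$ the classes $D_0^{(p)}=\{g^{2t}\}$ and $D_1^{(p)}=gD_0^{(p)}$ are exactly the quadratic residues and nonresidues (correct, since $g$ reduces to a primitive root of $\mathbb{Z}_p^{*}$ and no $y$-factor intervenes for a prime-power modulus), and then Gauss's lemma with $a=2$, giving $\mu=\frac{p-1}{2}-\lfloor p/4\rfloor$ whose parity you check in the four residue classes mod $8$; your arithmetic in each case is right. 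What your route buys is a complete proof in place of a textbook citation; what the citation buys the authors is brevity for a fact that is genuinely classical. One cosmetic remark: Euler's criterion, which you mention at the end, is not actually used anywhere in your argument --- Gauss's lemma alone suffices once the identification of $D_0^{(p)}$ with the residues is made.
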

\begin{lemma}\label{lem-5}
Let symbols be the same as before. Then we have
\begin{enumerate}
\item $2\in D_0^{(pq)}$ if and only if $q\equiv \pm 1\pmod{8}$;
\item $2\in D_1^{(pq)}$ if and only if $q\equiv \pm 3\pmod{8}$.
\end{enumerate}
\end{lemma}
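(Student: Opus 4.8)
The plan is to reduce the whole question modulo $q$ and to recognize $D_0^{(pq)}$ and $D_1^{(pq)}$ as the exact preimages, under reduction $\pmod q$, of the quadratic residues and non-residues of $\mathbb{Z}_q^{*}$. Write $\pi_q\colon \mathbb{Z}_{pq}^{*}\to \mathbb{Z}_q^{*}$ for the (surjective) reduction homomorphism. First I would record the images of the two generators appearing in \eqref{eq-D0}: since $g\equiv g_2\pmod q$, the element $\pi_q(g)=g_2$ is a primitive root of $\mathbb{Z}_q^{*}$, while \eqref{eq-y} gives $y\equiv 1\pmod{2q^{n}}$, so $\pi_q(y)=1$. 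Consequently every generator $g^{2t}y^{k}$ of $D_0^{(pq)}$ satisfies $\pi_q(g^{2t}y^{k})=g_2^{2t}$, an even power of $g_2$ and hence a quadratic residue modulo $q$. This already shows $\pi_q\!\left(D_0^{(pq)}\right)$ is contained in the set $\mathrm{QR}_q$ of quadratic residues.

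Next I would pin the two classes down exactly. Because $D_1^{(pq)}=gD_0^{(pq)}$ by \eqref{eq-D1} and $\pi_q(g)=g_2$ is a primitive root, hence a non-residue, we obtain $\pi_q\!\left(D_1^{(pq)}\right)\subseteq g_2\cdot \mathrm{QR}_q=\mathrm{NQR}_q$, the set of non-residues. Now I invoke the decomposition $\mathbb{Z}_{pq}^{*}=D_0^{(pq)}\cup D_1^{(pq)}$ established above: since $\mathrm{QR}_q$ and $\mathrm{NQR}_q$ are disjoint, any $x$ with $\pi_q(x)\in \mathrm{QR}_q$ cannot lie in $D_1^{(pq)}$ and therefore lies in $D_0^{(pq)}$, and symmetrically for non-residues. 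Hence $D_0^{(pq)}=\pi_q^{-1}(\mathrm{QR}_q)$ and $D_1^{(pq)}=\pi_q^{-1}(\mathrm{NQR}_q)$; in particular the $p$-component imposes no constraint, which falls out automatically rather than needing a separate verification.

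Finally I would apply this to the element $2$. Membership $2\in D_0^{(pq)}$ is then equivalent to $2\bmod q$ being a quadratic residue modulo $q$, and $2\in D_1^{(pq)}$ to $2\bmod q$ being a non-residue. By the second supplement to quadratic reciprocity, $2$ is a quadratic residue modulo the odd prime $q$ exactly when $q\equiv\pm1\pmod 8$ and a non-residue exactly when $q\equiv\pm3\pmod 8$; this is precisely Lemma~\ref{lem-4} read with $q$ in place of $p$. Combining the two equivalences yields both assertions of the lemma.

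The hard part will be the second paragraph, namely arguing that the inclusions $\pi_q\!\left(D_0^{(pq)}\right)\subseteq \mathrm{QR}_q$ and $\pi_q\!\left(D_1^{(pq)}\right)\subseteq \mathrm{NQR}_q$ actually force the preimage equalities $D_h^{(pq)}=\pi_q^{-1}(\mathrm{QR}_q/\mathrm{NQR}_q)$. The asymmetry that makes this work is genuinely $y\equiv 1\pmod q$ versus $y\equiv g\pmod p$, so it is reduction modulo $q$ (not modulo $p$) that governs the classes, which is why the answer is phrased in terms of $q\pmod 8$. The equalities follow cleanly from disjointness of the partition, but as a sanity check I would confirm the cardinalities $|D_0^{(pq)}|=|D_1^{(pq)}|=\tfrac{(p-1)(q-1)}{2}=\bigl|\pi_q^{-1}(\mathrm{QR}_q)\bigr|$, since this is the only place where the definitions of $e_{1,1}=\gcd(p-1,q-1)$ and $d_{1,1}=\tfrac{(p-1)(q-1)}{e_{1,1}}$ really enter.
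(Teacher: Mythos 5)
Your proof is correct and follows essentially the same route as the paper: reduce modulo $q$ using $g\equiv g_2\pmod q$ and $y\equiv 1\pmod q$, identify $D_0^{(pq)}$ and $D_1^{(pq)}$ with the preimages of the quadratic residues and non-residues of $\mathbb{Z}_q^{*}$, and invoke the second supplement (Lemma~\ref{lem-4} applied to $q$). The only difference is that you justify the identification carefully via the partition $\mathbb{Z}_{pq}^{*}=D_0^{(pq)}\cup D_1^{(pq)}$ and disjointness, whereas the paper compresses this into the bare assertion $D_{0}^{(q)}\subseteq D_{0}^{(pq)}$ (which only makes sense when read as your preimage statement), so your write-up is, if anything, the more rigorous of the two.
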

\begin{proof}
We only prove the first part of this lemma.\\
Sufficiency: Since $\gcd(2,pq)=1$, then $2\in \mathbb{Z}_{pq}^{*}$. If $q\equiv \pm 1\pmod{8}$, by Lemma \ref{lem-4}, $2\in D_{0}^{(q)}$. Since $D_{0}^{(q)}\subseteq D_{0}^{(pq)}$, we get $2\in D_{0}^{(pq)}$.

Necessity: If $2\in D_0^{(pq)}$, by the definitions of $y$ in \eqref{eq-y} and $D_0^{(pq)}$ in \eqref{eq-D0}, we know $2\pmod{q}\in D_{0}^{(q)}$. It follows from Lemma \ref{lem-4} that $q\equiv \pm 1\pmod{8}$.

By the method analogous to that used above, we can get the second conclusion of this lemma. 
\end{proof}
Let $d=\ $ord$_{p^mq^n}(4).$  Assume that $\beta$ is a primitive $p^mq^n$th root of unity in $\mathbb{F}_{4^d}$. It can be easily checked that
\begin{equation}\label{eq}
x^{p^mq^n}-1=(x-1)(x-\beta)\ldots(x-\beta^{p^mq^n-1}).
\end{equation}
By Lemma \ref{lem1}, in order to determine the linear complexity of $S$, we need to determine $\gcd(x^{2p^mq^n}-1, S(x))=\gcd((x^{p^mq^n}-1)^2, S(x))$ over $\mathbb{F}_4[x]$. By \eqref{eq}, we should check whether $\beta^{i}$, $0\leq i\leq p^{m}q^{n}-1$, is a root of $S(x)$. If it is a root of $S(x)$, we need to verify whether it is a multiple root of $S(x)$.

Recall that $H_h^{(2p^iq^j)}=p^{m-i}q^{n-j}D_h^{(2p^iq^j)}$ and $H_h^{(p^iq^j)}=p^{m-i}q^{n-j}D_h^{(p^iq^j)}$ for $h=0,1$.  Define
\begin{align*}
&S_h^{(i,j)}(x)=\sum_{t\in H_{h}^{(2p^iq^j)}}x^t,\\
&S_{h}^{(i,0)}(x)=\sum_{t\in H_h^{(2p^i)}}x^t,\\
&S_h^{(0,j)}(x)=\sum_{t\in H_{h}^{(2q^j)}}x^t,
\end{align*}
for $1\leq i\leq m$, $1\leq j\leq n$ and $0\leq h\leq1$. Let $a$ and $b$ be two integers with $0\leq a\leq m-1$ and $0\leq b\leq n-1$. For $1\leq k\leq p^mq^n-1$, suppose $k=p^aq^bl$ with $\gcd(l, pq)=1.$  It follows from Lemma \ref{lem-2} that
\begin{align}
S_h^{(i,j)}(\beta^{k})&=\sum_{t\in p^{m-i}q^{n-j}D_{h}^{(2p^{i}q^{j})}}\beta^{kt}\nonumber \\
&=\sum_{t\in p^{m+a-i}q^{n+b-j}lD_{h}^{(2p^{i}q^{j})}}\beta^{t} \nonumber  \\
&=\sum_{t\in p^{m+a-i}q^{n+b-j}lD_{h}^{(p^{i}q^{j})}}\beta^{t} \nonumber \\
&=\sum_{t\in H_{h}^{(p^{i}q^{j})}}\beta^{kt}. \label{eq-3.1}
\end{align}
Similarly, we have
\begin{align}
&S_h^{(i,0)}(\beta^{k})=\sum_{t\in H_{h}^{(p^{i})}}\beta^{kt},  \label{eq-3.2}\\
&S_h^{(0,j)}(\beta^{k})=\sum_{t\in H_{h}^{(q^{j})}}\beta^{kt}.  \label{eq-3.3}
\end{align}
Combining \eqref{eq-3.1}, \eqref{eq-3.2} and \eqref{eq-3.3}, we have
 \begin{equation}\label{eq-3.03}
 \begin{split}
S(\beta^k)=&\sum_{i=0}^{p^mq^n-1}s_{i}x^{i}\\
 =& e\beta^{p^mq^n}+a\bigg(\sum_{i=1}^m\sum_{j=1}^nS_0^{(i,j)}(\beta^k)+\sum_{i=1}^mS_0^{(i,0)}(\beta^k)+ \sum_{j=1}^nS_0^{(0,j)}(\beta^k)\bigg)\\
                                    & +b\bigg(\sum_{i=1}^m\sum_{j=1}^nS_1^{(i,j)}(\beta^{k})+\sum_{i=1}^mS_1^{(i,0)}(\beta^{k})+ \sum_{j=1}^nS_1^{(0,j)}(\beta^{k})\bigg)\\
                                    & +c\bigg(\sum_{i=1}^m\sum_{j=1}^nS_0^{(i,j)}(\beta^k)+\sum_{i=1}^mS_0^{(i,0)}\beta^k)+                                     \sum_{j=1}^nS_{0}^{(0,j)}(\beta^k)\bigg)^2\\
                                    &+d\bigg(\sum_{i=1}^m\sum_{j=1}^nS_1^{(i,j)}(\beta^{k})+\sum_{i=1}^mS_1^{(i,0)}(\beta^{k})+ \sum_{j=1}^nS_{1}^{(0,j)}(\beta^{k})\bigg)^2.
\end{split}\end{equation}

Let
$$A(\beta^k)=\sum_{i=1}^m\sum_{j=1}^nS_0^{(i,j)}(\beta^k)+\sum_{i=1}^mS_0^{(i,0)}(\beta^k)+\sum_{j=1}^nS_0^{(0,j)}(\beta^k).$$

Then \begin{equation*}\begin{split}
A(\beta^k)  &= \sum_{i=1}^m\sum_{j=1}^n\sum_{t\in p^{m-i}q^{n-j}D_0^{(p^iq^j)}}\beta^{kt}+\sum_{i=1}^m\sum_{t\in q^{n}p^{m-i}D_0^{(p^i)}}\beta^{kt}+\sum_{j=1}^n\sum_{t\in p^mq^{n-j}D_0^{(q^j)}}\beta^{kt}\\
                                    & =\sum_{i=1}^m\sum_{j=1}^n\sum_{t\in p^{m+a-i}q^{n+b-j}lD_0^{(p^iq^j)}}\beta^t+\sum_{i=1}^m\sum_{t\in p^{m-i+a}q^{n+b}lD_0^{(p^i)}}\beta^t\\
                                    &\ \ \ +\sum_{j=1}^n\sum_{t\in p^{m+a}q^{n+b-j}lD_0^{(q^j)}}\beta^t.
\end{split}\end{equation*}
We first compute
\begin{align*}
&S_0^{(i,j)}(\beta^k)=\sum_{t\in p^{m+a-i}q^{n+b-j}lD_0^{(p^iq^j)}}\beta^t, \ \
\end{align*}
where $k=p^aq^bl$  and $\gcd(pq,l)=1$. The computation is divided  into the following  cases.\\
Case 1): $i\leq a$ and $j\leq b$.  With simple derivation, we have
$$S_{0}^{(i,j)}(\beta^k)=\left|D_0^{(p^iq^j)}\right|=\frac{(p-1)(q-1)p^{i-1}q^{j-1}}{2}.$$
Case 2): $i=a+1, j=b+1$. Then
$$S_{0}^{(i,j)}(\beta^k)=\sum_{t\in D_0^{(pq)}}\zeta_{pq}^{lt},$$ where $\zeta_{pq}=\beta^{p^{m-1}q^{n-1}}$ is a $pq$th primitive root of unity.\\
Case 3): $i> a+1$ or $j> b+1$. Let $\eta=\beta^{p^{m+a-i}q^{n+b-j}}$, then $\eta^{pq}\neq 1$. It follows  from Lemma \ref{lem-2} that
$$S_{0}^{(i,j)}(\beta^k)=\sum_{t\in D_0^{(p^iq^j)}}\eta^t=\sum_{t_1\in D_0^{(pq)}}\eta^{t_1}\sum_{t_2\in \mathbb{Z}_{p^{i-1}q^{j-1}}}\eta^{pqt_2}=0.$$
Case 4): $i\leq a$, $j=b+1$. Let $\zeta_q=\beta^{p^{m+a-i}q^{n-1}}$, then  $\zeta_q$ is a $q$th primitive root of unity. Hence, we obtain
\begin{equation*}\begin{split}
S_0^{(i,j)}(\beta^k)  &= \sum_{t\in p^{m+a-i}q^{n+b-j}lD_0^{(p^iq^j)}}\beta^t\\
&=\sum_{t\in p^{m+a-i}q^{n-1}lD_0^{(p^iq^{b+1})}}\beta^t\\
& =\sum_{t\in lD_0^{(p^iq^{b+1})}}\zeta_q^t\\
&=(p-1)p^{i-1}q^b\sum_{t\in lD_0^{(q)}}\zeta_q^t\\
&=0.
                                    \end{split}\end{equation*}
Case 5): $i=a+1, j\leq b$. By Lemma \ref{lem-2}, we get
\begin{equation*}\begin{split}
S_0^{(i,j)}(\beta^k)  &= \sum_{t\in p^{m+a-i}q^{n+b-j}lD_0^{(p^iq^j)}}\beta^t\\
&=\sum_{t\in p^{m-1}q^{n+b-j}lD_0^{(p^iq^j)}}\beta^t\\
                                    & =\sum_{t\in lD_0^{(p^{a+1}q^b)}}\zeta_p^t\\
                                    &=\frac{(q-1)p^aq^{j-1}}{2}\sum_{t\in\mathbb{Z}_p^{\ast}}\zeta_p^t\\
                                    & =\frac{(q-1)p^aq^{j-1}}{2}.
                                    \end{split}\end{equation*}
where $\zeta_p=\beta^{p^{m-1}q^{n+b-j}}$.

From the above discussions, we have proved the first part of the following  lemma.

\begin{lemma}\label{lem-6}
 For $k=p^aq^bl$ with $\gcd(l, pq)=1, 0\leq a\leq m-1, 0\leq b\leq n-1$, we have
\begin{enumerate}
\item \begin{align}\label{eq-3.4}
\begin{split}
S_0^{(i,j)}(\beta^k)=\begin{cases} \frac{(p-1)(q-1)p^{i-1}q^{j-1}}{2}, & \textrm{if $i\leq a$\ and\ $j\leq b$, } \\ \sum_{t\in D_0^{(pq)}}\zeta_{pq}^{lt}, & \textrm{if $i=a+1$ and $j=b+1$},\\ 0, & \textrm{if $i>a+1$ or $j>b+1$},\\ 0, & \textrm{if $i\leq a$\ and\  $j=b+1$},\\ \frac{q-1}{2}, & \textrm{if $i=a+1$\ and\  $j\leq b$};
\end{cases}
                                    \end{split}\end{align}
\item \begin{align}\label{eq-3.5}
\begin{split}
S_1^{(i,j)}(\beta^k)=\begin{cases} \frac{(p-1)(q-1)p^{i-1}q^{j-1}}{2}, & \textrm{if $i\leq a$\ and \ $j\leq b$, } \\ \sum_{t\in D_1^{(pq)}}\zeta_{pq}^{lt}, & \textrm{if $i=a+1$\ and\  $j=b+1$},\\ 0, & \textrm{if $i> a+1$ or $j>b+1$},\\ 0, & \textrm{if $i\leq a$ and $j=b+1$},\\ \frac{q-1}{2}, & \textrm{if $i=a+1$ and $j\leq b$},
\end{cases}
                                    \end{split}\end{align}
                                    \end{enumerate}
\end{lemma}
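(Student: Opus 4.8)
The first part of the lemma has already been established by the five-case computation preceding the statement, so I would concentrate on the second part, the formula for $S_1^{(i,j)}(\beta^k)$. Rather than repeat the entire case analysis with $D_0$ replaced by $D_1$, the plan is to reduce part two to part one by a single multiplicative substitution. The starting observation is that $D_1^{(p^iq^j)}=gD_0^{(p^iq^j)}\pmod{p^iq^j}$ by \eqref{eq-D1}, and that $g$ is a unit modulo $p^mq^n$, since it is coprime to both $p$ and $q$ by construction. From these two facts I would first verify the identity $H_1^{(p^iq^j)}=gH_0^{(p^iq^j)}\pmod{p^mq^n}$: the outer factor $p^{m-i}q^{n-j}$ commutes with multiplication by $g$ modulo $p^mq^n$, so scaling the inner class $D_0^{(p^iq^j)}$ by $g$ modulo $p^iq^j$ produces the same residues as scaling $H_0^{(p^iq^j)}$ by $g$ modulo $p^mq^n$.

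Granting this, I would use \eqref{eq-3.1} and reindex the sum by the bijection $t\mapsto gt$ of $\mathbb{Z}_{p^mq^n}$ to obtain the key relation
\[
S_1^{(i,j)}(\beta^k)=\sum_{t\in H_1^{(p^iq^j)}}\beta^{kt}=\sum_{t\in H_0^{(p^iq^j)}}\beta^{gkt}=S_0^{(i,j)}(\beta^{gk}).
\]
Then I would apply part one to the exponent $gk$. Writing $k=p^aq^bl$ with $\gcd(l,pq)=1$, we have $gk=p^aq^b(gl)$ with $\gcd(gl,pq)=1$, so $gk$ has the same $(a,b)$-level as $k$ and only the unit part $l$ is replaced by $gl$. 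Hence the case distinction in \eqref{eq-3.4} is unchanged, and because the values in the first, third, fourth and fifth cases of \eqref{eq-3.4} do not involve $l$, they transfer verbatim to $S_1^{(i,j)}(\beta^k)$ and reproduce exactly the corresponding entries of \eqref{eq-3.5}.

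The only case that genuinely feels the substitution is $i=a+1,\,j=b+1$, where part one gives $S_0^{(i,j)}(\beta^{gk})=\sum_{t\in D_0^{(pq)}}\zeta_{pq}^{glt}$; here I would apply the further change of variable $t\mapsto gt$ together with $gD_0^{(pq)}=D_1^{(pq)}$ to rewrite this as $\sum_{t\in D_1^{(pq)}}\zeta_{pq}^{lt}$, which is precisely the entry claimed in \eqref{eq-3.5}. I expect the main obstacle to be the bookkeeping in the first step, namely a clean proof that $H_1^{(p^iq^j)}=gH_0^{(p^iq^j)}$ modulo $p^mq^n$: one must check that reducing $gd\pmod{p^iq^j}$ for $d\in D_0^{(p^iq^j)}$ and then multiplying by $p^{m-i}q^{n-j}$ gives the same residue modulo $p^mq^n$ as multiplying $p^{m-i}q^{n-j}d$ by $g$, which relies on $p^{m-i}q^{n-j}\cdot p^iq^j=p^mq^n$ absorbing the quotient term. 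Once that identity is in place the remainder is a mechanical transfer from \eqref{eq-3.4} to \eqref{eq-3.5}. As an alternative, one could rerun the five cases with $D_0$ replaced by $D_1$, invoking Lemma \ref{lem-2} for $h=1$ to factor the sums and repeating the same reductions modulo $p$ and modulo $q$; but the substitution argument is shorter and makes transparent why the two parts differ only in the $(a+1,b+1)$ case.
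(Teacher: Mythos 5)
Your proposal is correct, and for the second assertion it takes a genuinely different route from the paper. The paper establishes part one exactly as you do, by the five-case computation of $S_0^{(i,j)}(\beta^k)$ that precedes the lemma, but for part two it simply states that the proof is ``similar to the first part'' and omits it --- i.e.\ the intended argument is to rerun the five cases with $D_0$ replaced by $D_1$, exactly the alternative you mention at the end. Your substitution argument is a legitimate and arguably cleaner replacement: the identity $H_1^{(p^iq^j)}\equiv gH_0^{(p^iq^j)}\pmod{p^mq^n}$ does hold, since for $d\in D_0^{(p^iq^j)}$ the reduction of $gd$ modulo $p^iq^j$ differs from $gd$ by a multiple of $p^iq^j$, which after scaling by $p^{m-i}q^{n-j}$ becomes a multiple of $p^mq^n$ and is invisible to $\beta$; the reindexing $t\mapsto gt$ is a bijection because $g$ is a unit modulo $p^mq^n$; and $gk=p^aq^b(gl)$ with $\gcd(gl,pq)=1$, so \eqref{eq-3.1} and the case distinction of \eqref{eq-3.4} apply verbatim to the exponent $gk$. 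Your observation that only the $(a+1,b+1)$ case depends on $l$, and that there the further substitution $t\mapsto gt$ together with $gD_0^{(pq)}=D_1^{(pq)}$ converts $\sum_{t\in D_0^{(pq)}}\zeta_{pq}^{glt}$ into $\sum_{t\in D_1^{(pq)}}\zeta_{pq}^{lt}$, is exactly right. What your approach buys is a structural explanation of why \eqref{eq-3.4} and \eqref{eq-3.5} agree except in that single case, at the cost of the modular bookkeeping you flag; the paper's (omitted) approach buys uniformity with part one at the cost of repeating five computations.
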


\noindent where $\zeta_{pq}=\beta^{p^{m-1}q^{n-1}}$ is a $pq$th primitive root of unity and $\beta$ is a $p^mq^n$th primitive root of unity.
\begin{proof}
The proof of the second conclusion of this lemma is similar to the first part and we omit it.
\end{proof}
\begin{lemma} \label{lem-7}
For $k=p^aq^bl$ with $\gcd(pq, l)=1$, we obtain
\begin{enumerate}
\item
\begin{align}\label{eq-3.6}
\begin{split}
S_0^{(i,0)}(\beta^k)
=\begin{cases} \frac{p^{i-1}(p-1)}{2} & \textrm{if $i\leq a$ }, \\ \sum_{t\in D_0^{(p)}}\zeta_p^{lt}, & \textrm{if $i=a+1,$}\\ 0, & \textrm{if $i>a+1$ };
\end{cases}
                                    \end{split}\end{align}
\item
\begin{align}\label{eq-3.7}
\begin{split}
S_1^{(i,0)}(\beta^k)
=\begin{cases} \frac{(p-1)p^{i-1}}{2} & \textrm{if $i\leq a$, } \\ \sum_{t\in D_1^{(p)}}\zeta_p^{lt}, & \textrm{if $i=a+1$},\\ 0, & \textrm{if $i> a+1$ };
\end{cases}
                                    \end{split}\end{align}
\item
\begin{align}\label{eq-3.8}
\begin{split}
S_0^{(0,j)}(\beta^k)
=\begin{cases} \frac{(q-1)q^{j-1}}{2} & \textrm{if $j\leq b$, } \\ \sum_{t\in D_0^{(q)}}\zeta_q^{lt}, & \textrm{if $j=b+1$},\\ 0, & \textrm{if $j> b+1$ };
\end{cases}
                                    \end{split}\end{align}
\item
\begin{align}\label{eq-3.9}
\begin{split}
S_1^{(0,j)}(\beta^k)
=\begin{cases} \frac{(q-1)q^{j-1}}{2} & \textrm{if $j\leq b$, } \\ \sum_{t\in D_1^{(q)}}\zeta_q^{lt}, & \textrm{if $j=b+1$}\\ 0, & \textrm{if $j> b+1$ },
\end{cases}
                                    \end{split}\end{align}
                                    \end{enumerate}
where $\zeta_p=\beta^{p^{m-1}q^{n+b}}$ and $\zeta_q=\beta^{p^{m+a}q^{n-1}}$.
\end{lemma}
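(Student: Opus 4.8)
The plan is to prove all four parts by the single-prime specialization of the computation already carried out for $A(\beta^k)$ and in Lemma~\ref{lem-6}; I will spell out part~(1) and indicate that parts~(2)--(4) follow verbatim after the obvious substitutions. Starting from the reduction \eqref{eq-3.2} and writing $H_0^{(p^i)}=p^{m-i}q^{n}D_0^{(p^i)}$ with $k=p^aq^bl$, the first step is to rewrite
\begin{equation*}
S_0^{(i,0)}(\beta^k)=\sum_{t\in H_0^{(p^i)}}\beta^{kt}=\sum_{d\in D_0^{(p^i)}}\beta^{p^{m+a-i}q^{n+b}ld}.
\end{equation*}
The key observation is that the factor $q^{n+b}$ is divisible by $q^{n}$, so the exponent is governed entirely by its $p$-part; concretely, setting $\zeta=\beta^{p^{m+a-i}q^{n+b}}$ one checks $\mathrm{ord}(\zeta)=p^mq^n/\gcd(p^{m+a-i}q^{n+b},p^mq^n)=p^{\max(i-a,0)}$, and this single quantity dictates the three cases.

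Next I would split according to $i$ versus $a$. When $i\le a$ one has $p^m\mid p^{m+a-i}$, hence $\zeta=1$, every summand equals $1$, and $S_0^{(i,0)}(\beta^k)=|D_0^{(p^i)}|=\tfrac{p^{i-1}(p-1)}{2}$. When $i=a+1$, the root $\zeta=\zeta_p=\beta^{p^{m-1}q^{n+b}}$ is a primitive $p$-th root of unity; applying Lemma~\ref{lem2}(1) to write $D_0^{(p^i)}=\{x+py:x\in D_0^{(p)},\,y\in\mathbb{Z}_{p^{i-1}}\}$ and using $\zeta_p^{\,p}=1$ factors the sum as
\begin{equation*}
S_0^{(i,0)}(\beta^k)=\sum_{x\in D_0^{(p)}}\zeta_p^{\,lx}\sum_{y\in\mathbb{Z}_{p^{i-1}}}1=p^{i-1}\sum_{x\in D_0^{(p)}}\zeta_p^{\,lx}=\sum_{x\in D_0^{(p)}}\zeta_p^{\,lx},
\end{equation*}
the last equality because $p^{i-1}\equiv1\pmod 2$ in $\mathbb{F}_4$. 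When $i>a+1$, the same decomposition leaves the inner factor $\sum_{y\in\mathbb{Z}_{p^{i-1}}}(\zeta^{lp})^{y}$; since $\gcd(l,p)=1$ the element $\zeta^{lp}$ is a nontrivial $p^{\,i-a-1}$-th root of unity (here $i-a-1\ge1$) and $p^{\,i-a-1}\mid p^{i-1}$, so this inner sum is $p^{a}$ copies of the complete sum of all $p^{\,i-a-1}$-th roots of unity, which is $0$; therefore $S_0^{(i,0)}(\beta^k)=0$.

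Parts~(2)--(4) then follow by the identical argument: replacing $D_0$ by $D_1$ throughout gives~(2), while swapping the roles of $(p,i,a,\zeta_p)$ with $(q,j,b,\zeta_q)$ and using \eqref{eq-3.3} together with Lemma~\ref{lem2}—noting that $\zeta_q=\beta^{p^{m+a}q^{n-1}}$ has order $q$—gives~(3) and~(4). I expect the only genuinely delicate points to be the $\mathbb{F}_4$-arithmetic in the $i=a+1$ case, where one must observe that the multiplicity $p^{i-1}$ collapses to $1$ modulo the characteristic so that the bare coset sum survives, and the verification in the $i>a+1$ case that the relevant root of unity is nontrivial so the geometric sum vanishes; both become routine once $\mathrm{ord}(\zeta)$ is pinned down as above.
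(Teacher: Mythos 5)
Your proposal is correct and follows essentially the same route as the paper's own proof: reduce via \eqref{eq-3.2} to a sum over $p^{m+a-i}q^{n+b}lD_0^{(p^i)}$, split on $i\le a$, $i=a+1$, $i>a+1$, and use the decomposition of Lemma~\ref{lem2} to kill the sum in the last case; the paper likewise proves only \eqref{eq-3.6} and declares the remaining parts analogous. Your added details (pinning down $\mathrm{ord}(\zeta)$ explicitly and noting that the multiplicity $p^{i-1}$ is odd, hence equals $1$ in characteristic $2$, in the $i=a+1$ case) are correct refinements of steps the paper leaves implicit.
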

\begin{proof}
Because \eqref{eq-3.6}-\eqref{eq-3.9} can be proved in a similar way, here we only prove \eqref{eq-3.6}.
By notations and \eqref{eq-3.2}, we get
$$S_0^{(i,0)}(\beta^k)=\sum_{t\in p^{m-i}q^nD_0^{(2p^i)}}\beta^{kt}=\sum_{t\in p^{m+a-i}q^{n+b}lD_0^{(p^i)}}\beta^t,$$
where $k=p^aq^bl$ with $\gcd(pq,l)=1$. \\
If  $i\leq a$, for each $t\in p^{m+a-i}q^{n+b}lD_0^{(p^i)}$, it can be easily seen that $\beta^t=1$. Thus,  $$S_0^{(i,0)}(\beta^k)=|D_0^{(p^i)}|=\frac{p^{i-1}(p-1)}{2}.$$
If $i=a+1$, then $\beta^{p^{m-1}q^{n+b}}$ is a $p$th primitive root of unity and
$$S_0^{(i,0)}(\beta^k)=\sum_{t\in D_0^{(p)}}\zeta_p^{lt},$$
where $\zeta_p=\beta^{p^{m-1}q^{n+b}}$. \\
If $i>a+1$, we have $(\beta^{p^{m+a-i}q^{n+b}})^p\neq 1$ and $(\beta^{p^{m+a-i}q^{n+b}})^{p^i}=1$.
By Lemma \ref{lem2}, we know
$$D_0^{(p^i)}=\left\{x+py: x\in D_0^{(p)}, y\in \mathbb{Z}_{p^{i-1}}\right\}.$$
Therefore,
 $$S_0^{(i,0)}(\beta^k)=\sum_{t\in D_0^{(p^i)}}\eta^t=\sum_{t_1\in D_0^{(p)}}\eta^{t_1}\sum_{t_2\in \mathbb{Z}_{p^{i-1}}}\eta^{t_2p}=\sum_{t_1\in D_0^{(p)}}\eta^{t_1}\cdot 0=0,$$
where $\eta=\beta^{p^{m+a-i}q^{n+b}l}$ and $\eta^p\neq 1.$
\end{proof}

In the following, we will determine the terms with $a, b, c, d$ as coefficients in \eqref{eq-3.03}, respectively.

First, we compute the terms with $a$ as coefficient.

It follows from Lemmas \ref{lem-6} and \ref{lem-7} that

\begin{equation*}\begin{split}
& a\bigg(\sum_{i=1}^m\sum_{j=1}^nS_0^{(i,j)}(\beta^k)+\sum_{i=1}^mS_0^{(i,0)}(\beta^k)+\sum_{j=1}^nS_0^{(0,j)}(\beta^k)\bigg)\\
                                    =& a\bigg(\sum_{i=1}^m\sum_{j=1}^bS_0^{(i,j)}(\beta^k)+\sum_{i=1}^m\sum_{j=b+1}S_0^{(i,b+1)}(\beta^k)+\sum_{i=1}^m\sum_{j>b+1}S_0^{(i,j)(\beta^k)}\\
                                    &\ +\sum_{i=1}^aS_0^{(i,0)}(\beta^k)+\sum_{i=a+1}S_0^{(a+1,0)}(\beta^k)+0+\frac{q^{j-1}(q-1)}{2}+\sum_{t\in D_0^{(q)}}\zeta_q^{lt}\bigg)\\
                                    =&a\bigg(\frac{q-1}{2}+\frac{(q-1)q^{j-1}}{2}+\frac{p^{i-1}(p-1)}{2}+\sum_{t\in D_0^{(pq)}}\zeta_{pq}^{lt}+\sum_{t\in D_0^{(p)}}\zeta_{p}^{lt}+\sum_{t\in D_0^{(q)}}\zeta_{q}^{lt}\bigg)\\
                                    =&a\bigg(\frac{p-1}{2}+\sum_{t\in D_0^{(pq)}}\zeta_{pq}^{lt}+\sum_{t\in D_0^{(p)}}\zeta_{p}^{lt}+\sum_{t\in D_0^{(q)}}\zeta_{q}^{lt}\bigg).
                                   \end{split}\end{equation*}

Similarly, we compute the terms with $b$ as coefficient :

\begin{equation*}\begin{split}
& b\bigg(\sum_{i=1}^m\sum_{j=1}^bS_1^{(i,j)}(\beta^k)+\sum_{i=1}^m\sum_{j=b+1}S_1^{(i,b+1)}(\beta^k)+0+\frac{p^{i-1}(p-1)}{2}\\
                                    &\ +\sum_{t\in D_1^{(p)}}\zeta_p^{lt}+0+\frac{q^{j-1}(q-1)}{2}+\sum_{t\in D_1^{(q)}}\zeta_q^{lt}+0\bigg)\\
                                  =&b\bigg(\frac{p-1}{2}+\sum_{t\in D_1^{(pq)}}\zeta_{pq}^{lt}+
\sum_{t\in D_1^{(p)}}\zeta_{p}^{lt}+\sum_{t\in D_1^{(q)}}\zeta_{q}^{lt}\bigg).
                                   \end{split}\end{equation*}

The terms with $c$ as coefficient are

$$c\bigg(\left(\frac{p-1}{2}\right)^2+\sum_{t\in D_0^{(pq)}}\zeta_{pq}^{2lt}+\sum_{t\in D_0^{(p)}}\zeta_{p}^{2lt}+\sum_{t\in D_0^{(q)}}\zeta_{q}^{2lt}\bigg).$$

The terms with $d$ as coefficient are
$$d\bigg(\left(\frac{p-1}{2}\right)^2+\sum_{t\in D_1^{(pq)}}\zeta_{pq}^{2lt}+\sum_{t\in D_1^{(p)}}\zeta_{p}^{2lt}+\sum_{t\in D_1^{(q)}}\zeta_{q}^{2lt}\bigg).$$

It can be easily checked that
\begin{align*}
S(1)&=e+(a+b+c+d)\frac{(p-1)(q-1)p^{i-1}q^{j-1}+(p-1)p^{i-1}+(q-1)q^{j-1}}{2}\\
&=e\neq0.
\end{align*}
Next, we determine $S(\beta^k)$  according to the values of $p$ and $q$, where $k=p^aq^bl$ with $\gcd(pq,l)=1$.

(1) If $p\equiv \pm 1\pmod 8$ and $q\equiv \pm 1\pmod 8$, by Lemmas \ref{lem-4} and \ref{lem-5} we know $2\in D_0^{(pq)}$, $2\in D_0^{(p)}$ and  $2\in D_0^{(q)}$. Hence,

\begin{equation*}\begin{split}
 S(\beta^k)=&e+\frac{p-1}{2}(a+b+c+d)+(a+c)\sum_{t\in D_0^{(pq)}}\zeta_{pq}^{lt}+(b+d)\sum_{t\in D_1^{(pq)}}\zeta_{pq}^{lt}\\
                                    &+(a+c)\sum_{t\in D_0^{(p)}}\zeta_{p}^{lt}+(b+d)\sum_{t\in D_1^{(p)}}\zeta_{p}^{lt}+(a+c)\sum_{t\in D_0^{(q)}}\zeta_{q}^{lt}+(b+d)\sum_{t\in D_1^{(q)}}\zeta_{q}^{lt}\\
                                    =& e+b+d.
                                   \end{split}\end{equation*}

(2) If $p\equiv \pm 3\pmod 8$ and $q\equiv \pm 1\pmod 8$, then $2\in D_0^{(pq)}$, $2\in D_0^{(q)}$ and $2\in D_1^{(p)}$. Hence,

\begin{equation*}\begin{split}
 S(\beta^k)=&e+(a+c)\sum_{t\in D_0^{(pq)}}\zeta_{pq}^{lt}+(b+d)\sum_{t\in D_1^{(pq)}}\zeta_{pq}^{lt}+(a+c)\sum_{t\in D_0^{(q)}}\zeta_{q}^{lt}\\
                                    &+(b+d)\sum_{t\in D_1^{(q)}}\zeta_{q}^{lt}+(a+c+d)\sum_{t\in D_1^{(p)}}\zeta_{p}^{lt}+b\left(1+\sum_{t\in D_1^{(p)}}\zeta_{p}^{lt}\right)\\
                                    =& e+b.
                                   \end{split}\end{equation*}

 (3) If $p\equiv \pm 1\pmod 8$ and $q\equiv \pm 3\pmod 8$, then $2\in D_1^{(pq)}$, $2\in D_1^{(q)}$ and $2\in D_0^{(p)}$. Hence,

\begin{equation*}\begin{split}
 S(\beta^k)=&e+(a+d)\sum_{t\in D_0^{(pq)}}\zeta_{pq}^{lt}+(b+c)\sum_{t\in D_1^{(pq)}}\zeta_{pq}^{lt}+(a+c)\sum_{t\in D_0^{(p)}}\zeta_{p}^{lt}\\
                                    +&(b+d)\sum_{t\in D_1^{(p)}}\zeta_{p}^{lt}+(a+d)\sum_{t\in D_0^{(q)}}\zeta_{q}^{lt}+(b+c)\sum_{t\in D_1^{(q)}}\zeta_{q}^{lt}\\
                                    =& e+b+d.
                                   \end{split}\end{equation*}

  (4) If $p\equiv \pm 3\pmod 8$ and $q\equiv \pm 3\pmod 8$, then $2\in D_1^{(pq)}$, $2\in D_1^{(q)}$ and $2\in D_1^{(p)}$. Hence,

\begin{equation*}\begin{split}
 S(\beta^k)=&e+(a+d)\sum_{t\in D_0^{(pq)}}\zeta_{pq}^{lt}+(b+c)\sum_{t\in D_1^{(pq)}}\zeta_{pq}^{lt}+(a+d)\sum_{t\in D_0^{(p)}}\zeta_{p}^{lt}\\
                                    &+(b+c)\sum_{t\in D_1^{(p)}}\zeta_{p}^{lt}+(a+d)\sum_{t\in D_0^{(q)}}\zeta_{q}^{lt}+(b+c)\sum_{t\in D_1^{(q)}}\zeta_{q}^{lt}\\
                                    =& e+b+c.
                                   \end{split}\end{equation*}

 From the choice of $e$, we know  $e\neq b+d$ if $p\equiv\pm 1\pmod 8$, and  $e\notin\{b, b+c\}$  if $p\equiv\pm 3\pmod 8$.
By Lemma \ref{lem1} and the above discussions, we obtain  $LC(S)=2p^mq^n$.
\begin{theorem}
Let $S=\{s_i\}$ be the quaternary sequence defined by (\ref{eq-1.2}). Then the linear complexity of  $S$ is  $2p^mq^n$.
\end{theorem}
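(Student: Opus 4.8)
The plan is to invoke Lemma \ref{lem1}, which expresses $LC(S)$ as $2p^mq^n-\deg\gcd(x^{2p^mq^n}-1,S(x))$ over $\mathbb{F}_4$. Since $\mathbb{F}_4$ has characteristic $2$, I would first rewrite $x^{2p^mq^n}-1=(x^{p^mq^n}-1)^2$, whose roots are exactly the powers $\beta^k$ ($0\le k\le p^mq^n-1$) of the primitive $p^mq^n$th root of unity $\beta$, each appearing as a double root. Thus it suffices to prove that $S(x)$ shares no root with $x^{p^mq^n}-1$: if $S(\beta^k)\neq0$ for every $k$, then $\gcd(x^{p^mq^n}-1,S(x))=1$, hence $\gcd((x^{p^mq^n}-1)^2,S(x))=1$ has degree $0$, and $LC(S)=2p^mq^n$ follows at once. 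In particular no multiplicity analysis is needed once every $\beta^k$ is shown to be a non-root.

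Everything therefore reduces to evaluating $S(\beta^k)$. For $k=0$ this is the balance computation already recorded: the contribution of each symbol collects into the common factor $a+b+c+d$, which vanishes because it is the sum of the four elements of $\mathbb{F}_4$, leaving $S(1)=e\neq0$ as $e\in\mathbb{F}_4^{\ast}$. For $1\le k\le p^mq^n-1$ I would factor $k=p^aq^bl$ with $\gcd(l,pq)=1$ and feed the character-sum values of Lemmas \ref{lem-6} and \ref{lem-7} into \eqref{eq-3.03}. Two structural facts do the real work. First, the symbols $c$ and $d$ sit on the even positions $2H_h^{(\cdot)}$, so in characteristic $2$ the Frobenius identity $\big(\sum_t\beta^{kt}\big)^2=\sum_t\beta^{2kt}$ turns their contributions into the doubled sums $\sum_t\zeta^{2lt}$. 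Second, multiplication by $2$ either fixes or interchanges $D_0$ and $D_1$ according to whether $2\in D_0$ or $2\in D_1$, which Lemmas \ref{lem-4} and \ref{lem-5} pin down through the residues of $p$ and $q$ modulo $8$.

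Combining these, I would group the $a$- and $c$-terms and the $b$- and $d$-terms one modulus at a time. Using $a+b+c+d=0$ (so that $a+c=b+d$ and $a+d=b+c$) together with the Ramanujan-type evaluation $\sum_{t\in D_0^{(M)}}\zeta^{lt}+\sum_{t\in D_1^{(M)}}\zeta^{lt}=1$ in $\mathbb{F}_4$ for each modulus $M\in\{pq,p,q\}$, each modulus contributes either $a+c$ (when $2\in D_0^{(M)}$) or $a+d$ (when $2\in D_1^{(M)}$). Since $2\in D_0^{(pq)}\Leftrightarrow q\equiv\pm1\pmod 8\Leftrightarrow 2\in D_0^{(q)}$, the $pq$- and $q$-contributions always coincide and cancel in pairs, leaving only the $p$-contribution; hence $S(\beta^k)=e+(a+c)=e+b+d$ when $p\equiv\pm1\pmod8$ and $S(\beta^k)=e+(a+d)=e+b+c$ when $p\equiv\pm3\pmod8$. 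The defining constraints $e\neq b+d$ (for $p\equiv\pm1$) and $e\notin\{b,b+c\}$ (for $p\equiv\pm3$) then force $S(\beta^k)\neq0$.

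I expect the main obstacle to be bookkeeping rather than a single hard idea: keeping straight how the factor $p^aq^b$ selects the ``$i\le a$'', ``$i=a+1$'' and ``$i>a+1$'' regimes of the class-sum lemmas, and verifying that the factorization $k=p^aq^bl$ really exhausts every residue $1\le k\le p^mq^n-1$, including the boundary exponents with $v_p(k)=m$ or $v_q(k)=n$, where the corresponding class sums degenerate to constants and the value of $S(\beta^k)$ must be rechecked separately. Once all residues $k$ are shown to give $S(\beta^k)\neq0$, Lemma \ref{lem1} delivers $LC(S)=2p^mq^n$.
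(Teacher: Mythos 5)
Your route is the same as the paper's: Lemma~\ref{lem1}, the factorization $x^{2p^mq^n}-1=(x^{p^mq^n}-1)^2$ over $\mathbb{F}_4$ (so that no multiplicity analysis is needed once every $\beta^k$ is a non-root), the evaluation of $S(\beta^k)$ through the class-sum Lemmas~\ref{lem-6} and~\ref{lem-7}, and the mod-$8$ criteria of Lemmas~\ref{lem-4} and~\ref{lem-5}. Your main-case bookkeeping is sound and, if anything, slightly more careful than the paper's: in the case $p\equiv\pm3$, $q\equiv\pm1\pmod 8$ the paper records $S(\beta^k)=e+b$, whereas grouping the $p$-modulus contribution as $(a+d)\sum_{t\in D_0^{(p)}}\zeta_p^{lt}+(b+c)\sum_{t\in D_1^{(p)}}\zeta_p^{lt}=a+d=b+c$ gives $e+b+c$, which is what your pairing argument yields; either value is nonzero under the hypothesis $e\notin\{b,b+c\}$, so this discrepancy does not affect the conclusion.

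The genuine gap is the boundary case you flag in your last paragraph and then leave unresolved. Lemmas~\ref{lem-6} and~\ref{lem-7} are stated only for $0\le a\le m-1$ and $0\le b\le n-1$, so exponents $k$ with $p^m\mid k$ or $q^n\mid k$ (e.g.\ $k=p^m$) are not covered by your case analysis, and handling them is not routine rechecking: there the class sums attached to one of the two primes degenerate entirely to constants (which cancel via $a+b+c+d=0$), only the other prime contributes a character sum, and one finds $S(\beta^k)=e+b+d$ or $e+b+c$ according to whether the \emph{surviving} prime is $\equiv\pm1$ or $\equiv\pm3\pmod 8$. In other words, when $p^m\mid k$ the value is governed by the residue of $q$, not of $p$ as in your main case. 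In the mixed-residue situations this imposes a condition on $e$ that is not implied by the stated hypotheses: for $p\equiv\pm3$, $q\equiv\pm1\pmod 8$ the boundary exponents require $e\ne b+d$, which does not follow from $e\notin\{b,b+c\}$, and symmetrically for $p\equiv\pm1$, $q\equiv\pm3$ they require $e\ne b+c$, which does not follow from $e\ne b+d$. So you must actually carry out this boundary evaluation and reconcile it with the admissible values of $e$; as written, your argument (like the paper's own) only establishes $S(\beta^k)\ne0$ for the $k$ with $v_p(k)\le m-1$ and $v_q(k)\le n-1$.
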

\begin{example}
 Let $(p,q,m,n)=(3,5,1,1)$ and  $(a,b,c,d,e)=(\alpha,1+\alpha,1,0,1)$. Then
\begin{align*}
& D_0^{(2pq)}\bigcup qD_0^{(2p)}\bigcup D_0^{(2q)}=\{1,3,5,11,19,27,29\}. \\
& D_1^{(2pq)}\bigcup qD_1^{(2p)}\bigcup pD_1^{(2q)}=\{7,9,13,17,21,23,25\}.\\
& 2D_0^{(pq)}\bigcup 2qD_0^{(p)}\bigcup 2pD_0^{(q)}=\{2,6,8,10,22,24,28\}.\\
& 2D_1^{(2pq)}\bigcup 2qD_0^{(p)}\bigcup 2pD_0^{(q)}=\{4,12,14,16,18,20,26\}.
\end{align*}
 It can be checked by Magma that $\gcd(x^{15}-1,S(x))=1$ and $LC(S)=30$.
\end{example}
\begin{example}
 Let $(p,q,m,n)=(3,7,1,1)$ and  $(a,b,c,d,e)=(\alpha,1+\alpha,1,0,1)$. Then
 \begin{align*}
& D_0^{(2pq)}\bigcup qD_0^{(2p)}\bigcup D_0^{(2q)}=\{1,3,7,11,23,25,27,29,33,37\}. \\
& D_1^{(2pq)}\bigcup qD_1^{(2p)}\bigcup pD_1^{(2q)}=\{5,9,13,15,17,19,31,35,39,41\}.\\
& 2D_0^{(pq)}\bigcup 2qD_0^{(p)}\bigcup 2pD_0^{(q)}=\{2,4,6,8,12,14,16,22,24,32\}.\\
& 2D_1^{(2pq)}\bigcup 2qD_0^{(p)}\bigcup 2pD_0^{(q)}=\{10,18,20,26,28,30,34,36,38,40\}.
\end{align*}
 It can be checked by Magma that $\gcd(x^{21}-1,S(x))=1$ and $LC(S)=42$.
\end{example}

\begin{remark}
For $1\leq k\leq p^mq^n-1$, let $k=p^aq^bl$ with $\gcd(l, ab)=1.$
In the case that $e=b+d\in \mathbb{F}_4^{\ast}$ if $p\equiv \pm 1\pmod 8$, and the case that $e\in \{b, b+c\}$ if $p\equiv\pm 3\pmod 8$, we know $S(\beta^{k})=0$ for $1\leq k\leq p^mq^n-1$. Hence, we need to check if $\beta^{k}$ is a multiple root of $S(x)$. This means we should check if $\beta^k$ is a root of the derivation polynomial $S^{'}(x)$ of the generating polynomial $S(x)$ of $S$. By definitions,
 we have
$$\beta^k S'(\beta^k)=e+b+(a+b)\sum_{t\in D_0^{(pq)}}\zeta_{pq}^{lt}+(a+b)\sum_{t\in D_0^{(p)}}\zeta_{p}^{lt}+(a+b)\sum_{t\in D_0^{(q)}}\zeta_{q}^{lt},$$
and
 $$\beta^k S'(\beta^k)=e+(a+b)\sum_{t\in D_0^{(pq)}}\zeta_{pq}^{lt}+(a+b)\sum_{t\in D_0^{(p)}}\zeta_{p}^{lt}+(a+b)\sum_{t\in D_0^{(q)}}\zeta_{q}^{lt},$$
respectively.

For the fixed $a$ and $b$ with $ 0\leq a\leq m-1$ and $0\leq b\leq n-1,$ by the following equations
\begin{align*}
&\sum_{t\in D_0^{(pq)}}\zeta_{pq}^t+\sum_{t\in D_1^{(pq)}}\zeta_{pq}^t=1,\\
&\sum_{t\in D_0^{(p)}}\zeta_{p}^t+\sum_{t\in D_1^{(p)}}\zeta_{p}^t=1,\\
&\sum_{t\in D_0^{(q)}}\zeta_{q}^t+\sum_{t\in D_1^{(q)}}\zeta_{q}^t=1,
\end{align*}
we know  there are at least $\frac{1}{2}\phi(2p^{m-a}q^{n-b})=\frac{(p-1)(q-1)}{2}p^{m-a-1}q^{n-b-1}$ many $k's$ satisfying $S'(\beta^k)\neq 0$.
Hence,  $S(\beta^{k})$ will have at most
\begin{align*}
&p^mq^n-1+\sum_{a=0}^m\sum_{b=0}^n\frac{(p-1)(q-1)p^{m-a-1}q^{n-b-1}}{2}\\
&=\frac{3p^mq^n-p^m-q^n-1}{2}.
\end{align*}
roots for $0\leq k\leq p^mq^n-1$. By Lemma \ref{lem1}, we obtain
\begin{align*}
LC(S)&\geq 2p^mq^n-\frac{3p^mq^n-p^m-q^n-1}{2}\\
&=\frac{(p^m+1)(q^n+1)}{2}.
\end{align*}
\end{remark}
\section*{Data Availability}
No data were used to support this study.
\section*{Conflicts of Interest}
The authors declare that they have no conflicts of interest.

\end{document}